\title{Doubly Perfect Nonlinear Boolean Permutations}
\author{Laurent Poinsot}
\institute{LIPN CNRS UMR 7030, Institut Galil\'ee - Universit\'e Paris-Nord, 99, avenue Jean-Baptiste Cl\'ement, 93430 Villetaneuse, France} 
\begin{document}
\maketitle

\begin{abstract}
Due to implementation constraints the XOR operation is widely used in order to combine plaintext and 
key bit-strings in secret-key block ciphers. This choice directly induces the classical 
version of the differential attack by the use of XOR-kind differences. 
While very natural, there are many alternatives to the XOR. Each 
of them inducing a new form for its corresponding differential attack (using the appropriate notion 
of difference) and therefore block-ciphers need to use S-boxes that are resistant against these 
nonstandard differential cryptanalysis. In this contribution we study the functions that offer the best 
resistance against a differential attack based on a finite field multiplication. We also show that in 
some particular cases, there are robust permutations which offers the best resistant against {\bf{both}} 
multiplication and exponentiation based differential attacks. 
We call them {\it doubly perfect nonlinear permutations}. 
\end{abstract}
{\bf{Keywords:}} finite field, perfect nonlinear function, group action.

\section{Introduction}

Shannon has introduced in \cite{Sha} the notions of {\it diffusion} and 
{\it confusion} which have been mainly accepted and successfully used by 
cryptologists as guidelines 
in their work to design secret-key ciphers. These notions 
accurately set up a category of "nice" cryptographic objects namely 
the iterative block-ciphers such as the Data and Advanced Encryption Standards 
(see \cite{fip99,fip01}). Such an algorithm works as an iteration of a certain procedure called 
the round function. This functions is made in two pieces, 
a linear and a nonlinear parts, whose roles are to 
satisfy Shannon's diffusion and confusion. Diffusion refers to a sensitivity to the initial conditions: a small deviation in the input should cause a large 
change at the output. The linear part of the round-function is devoted to 
  provide a good level of diffusion. The goal of confusion is to hide the 
algebraic relations between the plaintext and the secret-key in order to make 
harder the statistical attacks. This is exactly the role assumed by the 
nonlinear part, also called {\it S-boxes}. One of the major attacks for which 
the S-boxes should be highly resistant is the {\it differential cryptanalysis} 
\cite{BS} or its "dual" counter-part the {\it linear attack} \cite{Mat}. 
The differential cryptanalysis is intrinsically related to the fashion the 
plaintexts and the round-keys are combined at each step. As to interlock 
plaintexts with keys, the XOR or 
component-wise modulo-two sum (or the addition in characteristic $2$) is usually chosen because of its 
implementation efficient nature. A block-cipher is then vulnerable to the 
differential attack if there is a nonzero XOR difference of two 
plaintexts such that the difference in output is statistically distinguishable 
from a random variable that follows a (discrete) uniform law. The S-boxes 
that offer the best resistance against such an attack are the 
{\it perfect nonlinear functions} \cite{Nyb}. As very particular 
combinatorial objects, perfect nonlinear functions do not exist in every 
configurations. For instance if one works in finite elementary Abelian 
$2$-groups, which in practice is usually the case, precisely because of 
the involutive nature of the addition, perfect nonlinear permutations can not 
exist. Since, yet in practice the plaintexts and ciphertexts have the 
same length, we can not use perfect nonlinear permutations as S-boxes. So in 
many cases block-ciphers exploit 
suboptimally differentially resistant functions, 
such as {\it almost perfect nonlinear} \cite{NK93} or even  
{\it differentially $4$-uniform} \cite{Nyb94} functions.\\
We make two simple observations. We have seen above that by nature, the XOR 
prohibits the existence of perfect nonlinear permutations. Moreover apart from  the XOR operation, the combination law of plaintexts and keys can take many 
forms. While really efficient by nature the XOR is a very specific case of 
group action and it could be interesting to use another one. Roughly speaking (more details are given in 
subsection~\ref{grp-action}) a group action is nothing but a particular 
external operation of a group on a set (as the scalar multiplication of 
vectors). The set in question is the collection of all the possible plaintexts. 
The set of (round) keys is endowed with a group structure and operates on the 
messages. Such a very general block-cipher could be vulnerable to a modified 
differential attack which should be no more related to the XOR differences 
but to the appropriate group action differences. In \cite{Poi06} 
is presented the algorithm of a such an attack.  Therefore the determination 
of the best resistant S-boxes or in other terms 
the adapted concept of perfect nonlinear 
functions, is needed. The theoretic description of such functions covers 
the following contributions \cite{PH04,PH05,Poi05} and the most important 
definitions and relevant results upon them are recalled in 
section~\ref{classic-general}. \\
We earlier say that altough natural, the 
XOR is not the only way to combine bit-strings. In the finite field setting 
the multiplication also may be used. The S-boxes that maximally 
resist against a 
differential attack based on the multiplication rather than the addition are 
called {\it multiplicatively perfect nonlinear functions} and in this paper 
we prove the existence of permutations with such a cryptographic property 
in many situations (and in most cases than classical perfect nonlinear 
functions). 
In addition, in some very particular cases, the 
multiplicative group ${\bbbk}^*$ of a finite field 
${\bbbk}$ in characteristic two can be equipped 
with another multiplication, which is distributive on the classical one. With 
this second multiplication (which is merely an exponentiation), 
${\bbbk}^*$ turns to be a finite field itself 
(but no more of characteristic two). 
This paper has as its major goal the construction of Boolean permutations 
over ${\bbbk}$ 
which are perfect nonlinear with respect to {\bf{both}} multiplications of the 
new field. 
They are called {\it doubly perfect nonlinear Boolean permutations} and can be seen as relevant alternatives to the use of almost perfect nonlinear 
permutations.

\section{Classical and generalized situations}\label{classic-general}

\subsection{Notations and conventions}

In this contribution the term {\it function} has the same meaning as the expression {\it total function}. 
If $X$ is a finite set then $|X|$ is its cardinality and $\mathit{Id}_X$ its 
identity map. For $f : X \rightarrow Y$ and $y \in Y$ we define as usually the fibre 
$f^{-1}(\{y\})=\{x \in X | f(x)=y\}$. 
For a additive group $(G,+,0)$ (resp. a multiplicative group 
$(G,.,1)$) we define $G^* = G \setminus \{0\}$ 
(resp. $G^*=G\setminus\{1\}$). For a unitary ring $(R,+,0,.,1)$ we have 
$R^* = R \setminus\{0\}$ and $R^{**}=
R^*\setminus\{1\}=R\setminus\{0,1\}$. Moreover the group of units of $R$ 
({\it i.e.} the group of invertible elements of the ring) is denoted $U(R)$ 
and obviously $U(R)^* = U(R)\setminus\{1\}$. In order to simplify the 
notations we sometimes identify a group (or a ring) 
with its underlying set. The ring of integers modulo $n$ is denoted 
$({\bbbz}_n,+,0,.,1)$ and its underlying set is identified with the 
particular system of representatives of residue classes 
$\{0,1,\ldots,n-1\}$. The finite field of characterisitic $p$ 
with $p^m$ elements is denoted 
${\mathsf{GF}}(p^m)$. A prime field ${\mathsf{GF}}(p)$ is identified with 
${\bbbz}_p$ and therefore with $\{0,1,\ldots,p-1\}$. 
Finally $\mathit{Aut}(G)$ denotes the set of all 
group automorphisms of a group $G$.

\subsection{Group actions}\label{grp-action}
Essential to everything that we shall discuss in this paper is the notion of 
group actions.\\
Let $G$ be a group and $X$ a nonempty set. We say that $G$ {\it acts on} $X$ if there is a group homomorphism 
$\phi:G \rightarrow S(X)$, where $S(X)$ is the group of permutations over $X$. Usually for $(g,x)\in G\times X$, we use the 
following convenient notation
\begin{equation}
g.x := \phi(g)(x)
\end{equation}
and so we hide any explicit reference to the morphism $\phi$. An action is called {\it faithful} if the corresponding homomorphism 
$\phi$ is one-to-one. It is called {\it regular} if for each $(x,y)\in X^2$ there is one and only one $g \in G$ such that $g.x = y$. 
A regular action is also faithful.
\begin{example}
\textup{
\begin{itemize}
\item A group $G$ acts on itself by (left) translation: $g.x := gx$ for $(g,x) \in G^2$ ($G$ is here written multiplicatively). This action is regular;
\item A subgroup $H$ of a group $G$ also acts on $G$ by translation: $h.x := hx$ for $(h,x) \in H\times G$. This action is faithful 
and if $H$ is a proper subgroup, then the action is not regular;
\item The multiplicative group ${\bbbk}^*$ of a field ${\bbbk}$ acts on ${\bbbk}$ by the multiplication law of 
the group. This action is faithful but not regular since $0$ is fixed by every elements of ${\bbbk}^*$. More generally the action of ${\bbbk}^*$ on a 
${\bbbk}$-vector space by scalar multiplication is also a faithful action 
(in this case the null vector is fixed by any scalar multiplication). 
\end{itemize}
}
\end{example}
\subsection{Group action perfect nonlinearity}
Let $X$ and $Y$ be two finite nonempty sets. A function $f$ is called {\it balanced} if for each $y\in Y$,
\begin{equation}
|\{x \in X | f(x)=y\}|=\frac{|X|}{|Y|}\ .
\end{equation}
With the concept of group actions we now have all the ingredients to recall the notion of group action 
perfect nonlinearity (see \cite{PH05}). 
\begin{definition}
\textup{Let $G$ be a finite group that acts faithfully on a finite nonempty set $X$. Let $H$ be a finite group (written additively). 
A function 
$f : X \rightarrow H$ is called {\it perfect nonlinear} (by respect to the action of $G$ on $X$) or 
{\it $G$-perfect nonlinear} if for each $\alpha \in G^*$, the 
{\it derivative of $f$ in direction $\alpha$}
\begin{equation}
\begin{array}{llll}
d_{\alpha}f: & X & \rightarrow & H\\
& x & \mapsto & f(\alpha.x)-f(x)
\end{array}
\end{equation}
is balanced or in other words for each $\alpha \in G^*$ and each $\beta \in H$,
\begin{equation}
|\{x \in X | d_{\alpha}f(x) = \beta\}|=\frac{|X|}{|H|}\ .
\end{equation}}
\end{definition}
As we can see our definition coincides with the classical one 
(see \cite{CD04}) in the classical 
situations ($G$ acts on itself by left translation).

\section{Doubly perfect nonlinear Boolean permutations}
In the finite fields settings there are two main natural group actions, namely additive and 
multiplicative translations. The first one is the standard used as plaintext and key combination 
process and has been widely studied in terms of (classical) perfect nonlinearity and/or 
bentness. In this contribution we focus on the second one: we construct perfect nonlinear functions 
by respect to multiplication rather than addition called 
{\it multiplicatively perfect nonlinear functions}. 
Moreover in very particular cases, multiplication 
can be seen as an addition of a new finite field. In this paper we exhibit some perfect nonlinear 
functions by respect to both original and new multiplications called {\it doubly perfect nonlinear 
functions}.

\subsection{Multiplicatively perfect nonlinear functions}

Let us begin with a lemma whose proof is a triviality.
\begin{lemma}\label{lemme_pour_egalite_entre_ImInv_et_Ker}
Let $G$ and $H$ be two finite groups (written multiplicatively). 
Let $\lambda$ be a group homomorphism 
from $G$ to $H$. For each $\beta \in \lambda(G)$, 
\begin{equation}
|\lambda^{-1}(\{\beta\})|=|\ker \lambda|\ .
\end{equation}
\end{lemma}

Let $d$ and $m$ be two nonzero integers. We denote by $V(p,m,d)$ 
any $d$ dimensional vector space over the finite field ${\mathsf{GF}}(p^m)$. 
We use the same symbols "$+$" (resp. "$-$") 
to denote both additions (resp. substractions) of $V(p,m,d)$ and 
${\mathsf{GF}}(p^m)$ and $\alpha.v$ is the scalar multiplication of 
$v \in V(p,m,d)$ by $\alpha \in {\mathsf{GF}}(p^m)$. 
\begin{lemma}\label{lemme_sur_les_cardinaux}
Let $d,\ e,\ m,\ n > 0$ be any integers. 
Let $\lambda$ be a group 
homomorphism from $(V(p,m,d),+)$ to $(V(p,n,e),+)$. Let $G$ be a 
subgroup of the group ${\mathsf{GF}}(p^m)^*$. Then for each $\beta \in 
\lambda(V(p,m,d))$ and for each $\alpha \in G^*$, 
\begin{equation}
|\{v \in V(p,m,d) | d_{\alpha}\lambda(v)=\beta\}| = |\lambda^{-1}(\{\beta\})| 
= |\ker \lambda|\ .
\end{equation}
\end{lemma}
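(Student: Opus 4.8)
The plan is to show that the derivative $d_\alpha\lambda$ is, up to a bijective reindexing of its source, nothing but $\lambda$ itself, so that its fibres have exactly the same cardinalities as those of $\lambda$; the stated equalities then follow, the second one being a direct appeal to Lemma~\ref{lemme_pour_egalite_entre_ImInv_et_Ker}.

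First I would exploit that $\lambda$ is a homomorphism of additive groups, together with the distributivity of scalar multiplication over addition in $V(p,m,d)$. For every $v \in V(p,m,d)$ and every $\alpha \in G^*$ one has
\begin{equation}
d_{\alpha}\lambda(v) = \lambda(\alpha.v) - \lambda(v) = \lambda(\alpha.v - v) = \lambda((\alpha-1).v)\ ,
\end{equation}
where the middle equality uses additivity of $\lambda$ and the last uses $\alpha.v - v = (\alpha-1).v$. Writing $S$ for the scalar multiplication $v \mapsto (\alpha-1).v$, this reads $d_{\alpha}\lambda = \lambda \circ S$.

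The one place where the hypothesis $\alpha \in G^*$ is genuinely needed is that $\alpha \neq 1$: since $G \subseteq {\mathsf{GF}}(p^m)^*$ and, by the conventions fixed above, $G^* = G\setminus\{1\}$, we get $\alpha - 1 \neq 0$ in ${\mathsf{GF}}(p^m)$. Hence $S$ is an invertible scalar multiplication, i.e. a bijection of $V(p,m,d)$ with inverse $v \mapsto (\alpha-1)^{-1}.v$. Consequently, for any $\beta$,
\begin{equation}
\{v \in V(p,m,d) \mid d_{\alpha}\lambda(v)=\beta\} = S^{-1}\bigl(\lambda^{-1}(\{\beta\})\bigr)\ ,
\end{equation}
and because $S$ is a bijection this set has the same cardinality as $\lambda^{-1}(\{\beta\})$. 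This gives the first claimed equality.

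Finally, for $\beta \in \lambda(V(p,m,d))$, the equality $|\lambda^{-1}(\{\beta\})| = |\ker\lambda|$ is exactly Lemma~\ref{lemme_pour_egalite_entre_ImInv_et_Ker} applied to the additive homomorphism $\lambda$ between the finite groups $(V(p,m,d),+)$ and $(V(p,n,e),+)$. I do not anticipate any serious obstacle: the argument is just a reindexing of fibres. The only subtlety worth flagging is that the reduction $d_{\alpha}\lambda = \lambda \circ S$ depends on $\lambda$ being additive (not on it being an arbitrary map) and that the bijectivity of $S$ truly requires $\alpha \neq 1$, which is precisely what restricting to $G^*$ secures.
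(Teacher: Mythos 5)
Your proof is correct: the identity $d_\alpha\lambda(v)=\lambda((\alpha-1).v)$ follows from additivity of $\lambda$ and distributivity of scalar multiplication, the restriction to $\alpha\in G^*$ (so $\alpha\neq 1$, hence $\alpha-1\neq 0$) makes $v\mapsto(\alpha-1).v$ a bijection, and Lemma~\ref{lemme_pour_egalite_entre_ImInv_et_Ker} finishes the count. The paper itself omits the proof of this lemma ("not difficult and thus is not given here"), but your argument is the natural one and is exactly the mechanism the author deploys multiplicatively in the proof of Theorem~\ref{theorem_avec_ssgroupes}, so it can stand as the intended proof.
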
 
The proof of the previous lemma is not difficult and thus is not given here.
\begin{theorem}\label{theorem_general_mult_PN}
Let $d,\ e,\ m,\ n > 0$ be any integers such that $d^m \geq e^n$. Let $\lambda$ be a group 
epimorphism\footnote{A {\bf{group epimorphism}} is a group 
homomorphism which is onto.} from $(V(p,m,d),+)$ onto $(V(p,n,e),+)$. Then $\lambda$ 
is ${\mathsf{GF}}(p^m)^*$-perfect nonlinear.
\end{theorem}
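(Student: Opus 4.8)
The plan is to obtain the theorem as an almost immediate consequence of Lemma~\ref{lemme_sur_les_cardinaux}, applied with the acting group $G$ taken to be the full multiplicative group ${\mathsf{GF}}(p^m)^*$. First recall that ${\mathsf{GF}}(p^m)^*$ acts faithfully on $V(p,m,d)$ by scalar multiplication (the vector-space instance of the third item of the example), so that the notion of ${\mathsf{GF}}(p^m)^*$-perfect nonlinearity indeed applies to $\lambda$. Unwinding the definition, the goal is to show that for every $\alpha \in ({\mathsf{GF}}(p^m)^*)^* = {\mathsf{GF}}(p^m)^{**}$ and every $\beta \in V(p,n,e)$, the derivative fibre $\{v \in V(p,m,d) \mid d_{\alpha}\lambda(v) = \beta\}$ has cardinality exactly $|V(p,m,d)|/|V(p,n,e)|$.

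I would first dispose of the range of $\beta$. Since $\lambda$ is an epimorphism, $\lambda(V(p,m,d)) = V(p,n,e)$, so every $\beta$ in the codomain lies in the image and the hypothesis ``$\beta \in \lambda(V(p,m,d))$'' of Lemma~\ref{lemme_sur_les_cardinaux} is automatically met. Applying that lemma with $G = {\mathsf{GF}}(p^m)^*$ then gives, for every $\alpha \in {\mathsf{GF}}(p^m)^{**}$ and every $\beta \in V(p,n,e)$,
$$|\{v \in V(p,m,d) \mid d_{\alpha}\lambda(v) = \beta\}| = |\ker \lambda|,$$
a value independent of both $\alpha$ and $\beta$.

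It then remains to identify $|\ker \lambda|$ with the required ratio. This is where surjectivity is used a second time: the first isomorphism theorem yields $V(p,m,d)/\ker\lambda \cong V(p,n,e)$, whence $|\ker\lambda| = |V(p,m,d)|/|V(p,n,e)|$. Equivalently, one may sum Lemma~\ref{lemme_pour_egalite_entre_ImInv_et_Ker} over all $\beta \in V(p,n,e)$: the fibres $\lambda^{-1}(\{\beta\})$ partition $V(p,m,d)$ and each has size $|\ker\lambda|$, so $|V(p,m,d)| = |V(p,n,e)|\cdot|\ker\lambda|$. Combining the two displayed identities shows that $d_{\alpha}\lambda$ is balanced for every $\alpha \in {\mathsf{GF}}(p^m)^{**}$, which is precisely the assertion that $\lambda$ is ${\mathsf{GF}}(p^m)^*$-perfect nonlinear.

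I expect no serious obstacle: the statement is essentially a corollary of Lemma~\ref{lemme_sur_les_cardinaux}, and the only points needing care are bookkeeping ones. One must check that $G^*$ for the full group ${\mathsf{GF}}(p^m)^*$ is exactly ${\mathsf{GF}}(p^m)^{**}$, so that the range of $\alpha$ in the lemma matches the range demanded by the definition of perfect nonlinearity, and one must recognize that the balancedness value $|V(p,m,d)|/|V(p,n,e)|$ prescribed by the definition coincides with the structurally natural quantity $|\ker\lambda|$. The numerical hypothesis $d^m \geq e^n$ serves only to guarantee that such an epimorphism can exist and plays no further role once $\lambda$ is given.
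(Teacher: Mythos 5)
Your proposal is correct and follows essentially the same route as the paper: apply Lemma~\ref{lemme_sur_les_cardinaux} with $G={\mathsf{GF}}(p^m)^*$, use surjectivity so that every $\beta$ lies in the image, and then identify $|\ker\lambda|$ with $|V(p,m,d)|/|V(p,n,e)|$ via the partition of $V(p,m,d)$ into fibres (the paper uses exactly this counting argument; your alternative appeal to the first isomorphism theorem is an equivalent shortcut). No gaps.
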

\begin{proof}
Since $\lambda$ is onto, every $\beta \in V(p,n,e)$ belong to $\lambda(V(p,m,d))$. 
According to lemma~\ref{lemme_sur_les_cardinaux} with $G={\mathsf{GF}}(p^m)^*$, for each 
$\beta \in V(p,n,e)$ 
and for each $\alpha \in {\mathsf{GF}}(p^m)^{**}={\mathsf{GF}}(p^m)\setminus\{0,1\}$, 
$|\{v \in V(p,m,d) | d_{\alpha}\lambda (v)=\beta\}|=|\lambda^{-1}(\{\beta\})|
=|\ker \lambda|$. But $\{\lambda^{-1}(\{\beta\})\}_{\beta \in V(p,n,e)}$ is a 
partition of $V(p,m,d)$. Therefore we have $|V(p,m,d)|=\displaystyle \sum_{\beta \in V(p,n,e)}
|\lambda^{-1}(\{\beta\})|=|\ker \lambda||V(p,n,e)|$. So $|\ker \lambda|=
\frac{|V(p,m,d)|}{|V(p,n,e)|}=p^{md-ne}$.\qed
\end{proof}
In classical situations it is well-known that if a function $f : V(2,m,d) \rightarrow 
V(2,n,e)$ is bent then $md$ is an even integer and $md \geq 2 ne$. Replacing addition by multiplication 
allows us to find "bent" function even if $md$ is an odd integer and/or $2ne > md \geq ne$. When 
$md=ne$ (and $p=2$), 
almost perfect nonlinear (APN) functions are relevant for cryptographic purposes. They are 
defined (see~\cite{NK93}) by the fact that the equation $d_{\alpha}f(x)=\beta$ with $x$ as an unknown 
has at most two solutions 
for each $\alpha \not = 0$ and each $\beta$. The only known examples of APN permutations need $md$ to be 
an odd integer. In our case  
by construction any ${\mathsf{GF}}(p^m)$-linear isomorphism of $V(p,m,d)$ is a 
${\mathsf{GF}}(p^m)^*$-perfect nonlinear; so 
it is also the case for $p=2$ and $md$ an even integer. 

\subsection{Doubly perfect nonlinear Boolean permutations}
The group of units ${\mathsf{GF}}(p^m)^*$ of the finite field 
${\mathsf{GF}}(p^m)$ can be equipped with another multiplication that turns it 
into a unitary commutative ring. Indeed let $\gamma$ be a primitive root of 
${\mathsf{GF}}(p^m)$. The {\it exponential}
\begin{equation}
\begin{array}{llll}
e_{\gamma} : & ({\bbbz}_{p^m -1},+) & \rightarrow & {\mathsf{GF}}(p^m)^*\\
& i & \mapsto & \gamma^i
\end{array}
\end{equation} 
is a group isomorphism (in the remainder we always suppose that such a 
primitive root $\gamma$ is fixed). 
We can use it to turn ${\mathsf{GF}}(p^m)^*$ into a 
commutative unitary ring, isomorphic to the ring of modulo $p^m -1$ integers, 
by\footnote{More rigorously $\gamma^i \times \gamma^j = 
e_{\gamma}(e_{\gamma}^{-1}(\gamma^i)e_{\gamma}^{-1}(\gamma^j)) = 
e_{\gamma}(ij)$. In fact any calculation in the exponent should be understood 
modulo $p^m -1$.} 
$\gamma^{i}\times\gamma^j = \gamma^{ij}$. We call such a  structure 
$({\mathsf{GF}}(p^m),+,0,.,1,\times,\gamma)$ a 
{\it characteristic} $(p,p^m -1)$ {\it field-ring} 
(which means that $({\mathsf{GF}}(p^m),+,0,.,1)$ is a characteristic $2$ field and 
$({\mathsf{GF}}(p^m)^*,.,1,\times,\gamma)$ is a characteristic $p^m -1$ ring 
{\it i.e.} $\gamma^{p^m -1} = 1$, $\gamma^i \not = 1$ for all 
$0 < i < p^m -1$) or {\it double-field} when $({\mathsf{GF}}(p^m)^*,.,1,\times,\gamma)$ is also a field. 
The multiplicative identity of the ring 
$({\mathsf{GF}}(p^m)^*,.,1,\times,\gamma)$ 
is $\gamma^1=\gamma$ and the classical rules of distributivity, absorption and associativity take the 
following forms $\gamma^i \times (\gamma^j \gamma^k)=
(\gamma^i \times \gamma^j)(\gamma^i \times \gamma^k)$, $1\times \gamma^i = 1$, 
$\gamma^i \times (\gamma^j \times \gamma^k)=(\gamma^i \times \gamma^j)\times 
\gamma^k$. The group of units of this ring, $U({\mathsf{GF}}(p^m)^*)$, 
is equal to $\{\gamma^i | i \in U({\bbbz}_{p^m -1})\} = 
\{\gamma^i | (i,p^m -1)=1\}$ (where $(i,j)$ is the greatest common divisor of 
$i$ and $j$) and if $\gamma^i$ is invertible with respect to $\times$ ({\it i.e.} $\gamma^i$ is a unit), 
$(\gamma^{i})^{-1}=\gamma^{\frac{1}{i}}$. 
If $i \not = 0$ is not congruent with 
$1$ modulo $p^m -1$, then it is a zero divisor in ${\bbbz}_{p^m -1}$: it 
exists $j \in {\bbbz}_{p^m-1}^*$ such that $ij = 0$, therefore 
$\gamma^i$ is itself a zero divisor\footnote{More formally we should say a $\times$-divisor 
of $1$.} in ${\mathsf{GF}}(p^m)^*$ because 
$\gamma^i \times \gamma^j = \gamma^{ij}=\gamma^0=1$. This ring is an integral domain if 
and only if $({\bbbz}_{p^m -1},+,0,.,1)$ is itself an integral domain or 
equivalently a (finite) field. So $({\mathsf{GF}}(p^m)^*,.,1,\times,\gamma)$ is 
a finite field if and only if $p^m -1$ is a prime integer. If $p$ is an odd 
prime number then the only possible choice is $p=3$ and $m=1$ (since 
$3^1 -1=2$) because in the other case $p^m -1 > 2$ and is even. The following 
lemma gives a constraint on $m$ when $p=2$. 

\begin{lemma}\label{lemma_Mersenne}
Let $k \in {\bbbn}^*$, $k > 1$. 
Let $m \in {\bbbn}^*$. If $m$ is not a prime integer then so is $k^m -1$. 
\end{lemma}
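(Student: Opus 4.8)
The plan is to reduce the claim to the elementary polynomial identity
\[
x^{b}-1=(x-1)\bigl(x^{b-1}+x^{b-2}+\cdots+x+1\bigr)
\]
and to exploit it through a well-chosen substitution. The genuine content of the statement concerns composite $m$ (it is ultimately used via its contrapositive: if $k^{m}-1$ is prime then $m$ must be prime, which is the constraint announced before the lemma). So first I would use the hypothesis that $m$ is not prime to write $m=ab$ with $1<a<m$ and $1<b<m$; such a factorisation exists precisely because $m$ fails to be prime.

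Next I would substitute $x=k^{a}$ into the identity above, obtaining
\[
k^{m}-1=(k^{a})^{b}-1=(k^{a}-1)\bigl((k^{a})^{b-1}+(k^{a})^{b-2}+\cdots+1\bigr),
\]
so that $k^{a}-1$ divides $k^{m}-1$. The whole remaining task is to verify that this divisor is \emph{proper}, i.e.\ that it is neither $1$ nor $k^{m}-1$ itself, so that the displayed factorisation is truly nontrivial.

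For the lower bound, since $k>1$ and $a>1$ we have $k^{a}\geq k^{2}>k\geq 2$, hence $k^{a}-1>1$. For the upper bound, $a<m$ together with $k>1$ gives $k^{a}<k^{m}$, whence $k^{a}-1<k^{m}-1$. Thus $k^{a}-1$ is a divisor of $k^{m}-1$ strictly between $1$ and $k^{m}-1$, so $k^{m}-1$ admits a nontrivial divisor and is therefore not prime. I do not anticipate any real obstacle: the only point demanding a little care is the boundary bookkeeping—ensuring that $m$ can be split with both factors strictly between $1$ and $m$ (so that $b\geq 2$, which also forces the cofactor $(k^{a})^{b-1}+\cdots+1$ to exceed $1$) and that the resulting inequalities on $k^{a}-1$ are strict. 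Once these inequalities are pinned down, the conclusion is immediate.
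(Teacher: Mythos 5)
Your proof is correct and follows essentially the same route as the paper: both factor $k^{m}-1=k^{ab}-1$ as $(k^{a}-1)$ times a geometric-sum cofactor, the only difference being that the paper establishes this identity by induction on $b$ while you invoke the standard polynomial identity directly. You are in fact slightly more complete than the paper, since you explicitly check that the divisor $k^{a}-1$ is proper (strictly between $1$ and $k^{m}-1$), a step the paper leaves implicit.
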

\begin{proof} 
Suppose that $m = rs$ where both $r$ and $s$ are integers greater or equal to 
$2$. We will prove that $k^{rs}-1=\displaystyle (k^r -1)\sum_{i=1}^{s} 
k^{r(s-i)}$ by induction on the integer $s$.\\
If $s=2$ then $k^{2r}-1 = (k^r -1)(k^r +1)$.\\
Let $s \in {\bbbn}^*$ such that $s \geq 2$. Suppose that for all 
integer $l$ such that $1 < l \leq s$, 
$k^{rl} - 1=\displaystyle (k^r -1)\sum_{i=1}^{l} k^{r(l-i)}$. 
Let us prove that 
$k^{r(s+1)} - 1=\displaystyle (k^r -1)\sum_{i=1}^{s+1} k^{r(s+1-i)}$. 
We have
\begin{equation}
\begin{array}{lll}
k^{r(s+1)} -1 &=& k^{r(s+1)} - k^r + k^r -1\\
&=& k^{r}(k^{rs} -1)+(k^r -1)\\
&=& k^{r}(k^{r}-1)\displaystyle\sum_{i=1}^{s}k^{r(s-i)} + (k^r -1)\ 
\mbox{(by induction hypothesis)}
\end{array}
\end{equation}
\begin{equation}
\begin{array}{lll}
&=& (k^r -1)\displaystyle \left (\sum_{i=1}^s k^{r(s+1-i)} +1\right )\\
&=&(k^r -1)\displaystyle \sum_{i=1}^{s+1}k^{r(s+1-i)}\ .
\end{array}
\end{equation} 
\qed
\end{proof}
An integer of the form $2^q -1$ where $q$ is a prime number is called a 
{\it Mersenne number}. When a Mersenne number is itself a prime integer, it is 
called a {\it Mersenne prime}\footnote{For instance $3=2^2 -1$, $5=2^3 -1$, $31=2^5 -1$ and 
$127=2^7 -1$ are Mersenne prime numbers.}. So given a Mersenne prime $p=2^q -1$, 
$({\mathsf{GF}}(2^q)^*,.,1,\times,\gamma)$ is isomorphic to the prime field 
$({\mathsf{GF}}(p),+,0,.,1)$ (which is identified with $({\bbbz}_{p},+,0,.,1)$) and $({\mathsf{GF}}(2^q),+,0,.,1,\times,\gamma)$ is a characteristic 
$(2,p)$ double-field 
({\it i.e.} $({\mathsf{GF}}(2^q),+,0,.,1)$ is a characteristic $2$ field and 
$({\mathsf{GF}}(2^q)^*,.,1,\times,\gamma)$ is a characteristic $p$ field).\\

We now characterize the existence of some subgroups of units in rings which will be useful 
in the sequel.
\begin{lemma}\label{opp-un-inversible}
Let $R$ be a non-trivial unitary ring\footnote{$R$ is not reduced to $0$.}. 
Then $-1$ is invertible in $R$.
\end{lemma}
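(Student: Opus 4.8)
The plan is to show that $-1$ is its own multiplicative inverse; that is, to verify $(-1)(-1)=1$. Exhibiting an element whose product with $-1$ equals the multiplicative identity $1$ establishes at once that $-1 \in U(R)$, which is exactly the claim. So the entire argument reduces to a single ring computation.

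The key computation rests on the elementary identity $(-a)b = -(ab)$, valid for all $a,b \in R$. To obtain it, first I would record the auxiliary fact that $0\cdot b = 0$ for every $b$, which follows from $0\cdot b = (0+0)\cdot b = 0\cdot b + 0\cdot b$ together with cancellation in the additive group $(R,+,0)$. Then $(-a)b = -(ab)$ follows because $(-a)b + ab = \bigl((-a)+a\bigr)b = 0\cdot b = 0$, so $(-a)b$ is precisely the additive inverse of $ab$.

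Applying this identity twice, I would compute
\begin{equation}
(-1)(-1) = -\bigl(1\cdot(-1)\bigr) = -\bigl(-(1\cdot 1)\bigr) = -(-1) = 1,
\end{equation}
where the final step uses that in the additive group $(R,+,0)$ the inverse of the inverse of an element is the element itself, and where $1\cdot 1 = 1$ by the definition of the unit. Hence $-1$ is invertible with $(-1)^{-1} = -1$.

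There is no genuine obstacle here: the result is a direct consequence of the ring axioms. The non-triviality hypothesis plays no essential role in the invertibility itself (indeed $-1$ would be its own inverse even in the trivial ring) beyond guaranteeing $1 \neq 0$, so that the conclusion is not vacuous. The only care required is to derive the auxiliary facts $0\cdot b = 0$ and $(-a)b = -(ab)$ from distributivity rather than taking them for granted.
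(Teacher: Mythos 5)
Your proof is correct and takes exactly the same approach as the paper, which simply notes that $(-1)(-1)=1$ and declares the result obvious; you merely spell out the derivation of that identity from the ring axioms. Nothing further is needed.
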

\begin{proof}
It is obvious since $(-1)(-1)=1$.\qed
\end{proof}
\begin{lemma}\label{lemma_existence_gr_unite_un_peu_special}
Let $n > 1$. 
The group of units $U({\bbbz}_n)$ contains at least one  subgroup $G$ such that for 
every $i \in G^*$ ({\it i.e.} $i \not = 1$ and $i \in G$), $i-1 \in U({\bbbz}_n)$ if and 
only if $n$ is equal to $2$ or is an odd integer. 
\end{lemma}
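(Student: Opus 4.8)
The plan is to prove the two implications separately, exhibiting an explicit witness subgroup for the ``if'' direction and ruling out all nontrivial candidates by a parity argument for the ``only if'' direction. Throughout I would use that, by Lemma~\ref{opp-un-inversible}, the element $-1 = n-1$ is always a unit of ${\bbbz}_n$, so $\{1,-1\}$ is always a subgroup of $U({\bbbz}_n)$ (of order $2$ as soon as $n>2$, since then $-1 \neq 1$). The elementary fact I will lean on is that for $i \in {\bbbz}_n$ one has $i-1 \in U({\bbbz}_n)$ iff $\gcd(i-1,n)=1$, and in particular $-2 = n-2 \in U({\bbbz}_n)$ iff $\gcd(2,n)=1$, i.e.\ iff $n$ is odd.

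First I would treat the ``if'' direction. When $n$ is odd I claim $G=\{1,-1\}$ works: its only nonidentity element is $-1$, and $(-1)-1 = -2$ is a unit precisely because $n$ is odd, so the defining condition holds for every $i \in G^*$. When $n=2$ one has $U({\bbbz}_2)=\{1\}$, so the only available subgroup is $G=\{1\}$; here $G^*=\emptyset$ and the condition holds vacuously. This settles both cases appearing on the right-hand side.

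Next I would prove the ``only if'' direction by contraposition: assuming $n$ is even and $n>2$, I would show that no subgroup $G$ with $G^* \neq \emptyset$ can satisfy the condition, so that the only subgroup meeting it is the trivial one. Indeed, if $2 \mid n$ then every unit $i \in U({\bbbz}_n)$ is odd, whence for any $i \in G^* \subseteq U({\bbbz}_n)$ the element $i-1$ is even and (since $i \neq 1$) nonzero, so $\gcd(i-1,n) \geq 2$ and $i-1 \notin U({\bbbz}_n)$. Thus a nontrivial witness subgroup exists exactly for odd $n$, which together with the degenerate case $n=2$ (where the full unit group $\{1\}$ qualifies) yields the stated equivalence.

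The individual steps are short, so the real care is not computational. The point that needs attention is the boundary case $n=2$ together with the status of the trivial subgroup $\{1\}$, which satisfies the condition vacuously for \emph{every} $n$. The equivalence must therefore be read as asserting the existence of a genuinely useful, i.e.\ nontrivial, witness subgroup when $n>2$, the case $n=2$ being covered by $U({\bbbz}_2)=\{1\}$; making this reading explicit, rather than any hard estimate, is the only real obstacle.
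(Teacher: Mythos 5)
Your proof is correct and follows essentially the same route as the paper's: the subgroup $\{\pm 1\}$ (generated by $-1$, a unit by Lemma~\ref{opp-un-inversible}) witnesses the odd case via the invertibility of $-2$, the parity argument on $i-1$ rules out even $n>2$, and $n=2$ is handled by the trivial unit group. Your closing observation that the trivial subgroup $\{1\}$ satisfies the condition vacuously for \emph{every} $n$ --- so that the ``only if'' direction must be read as denying the existence of a \emph{nontrivial} witness --- is a legitimate caveat that the paper's own proof silently elides.
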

\begin{proof}
If $n=2$ then $G=U({\bbbz}_2)=\{1\}$ is a group with the good properties. 
Let suppose that $n > 2$ is an even integer. Then $i$ belongs to 
$U({\bbbz}_n)$ if and only if $(i,n)=1$. Therefore $i$ is an odd integer. 
Then $i-1$ is equal to zero or is an even integer and it is 
 invertible in none of the two cases. Now let suppose that $n$ is an odd integer. Then $2$ is invertible modulo $n$. Since according 
to lemma~\ref{opp-un-inversible} 
(since $n > 1$, ${\bbbz}_n$ is non-trivial), 
$-1$ is a unit, $-2=2(-1)= -1 -1$ is also invertible. The group $G=
\langle -1 \rangle = \{\pm 1\}$ satisfies the assumptions of the lemma. \qed
\end{proof}
We should note that in the particular case where $n$ is a prime number $p$, 
${\bbbz}_p^*=U({\bbbz}_p)$ is such a group $G$. If $n=2^m -1$ then $n$ 
is odd so there is at least one subgroup $G$ of ${\bbbz}_{2^m-1}$ such 
that $\forall i \in G^*$, $i-1 \in U({\bbbz}_{2^m -1})$. If $p$ is an odd 
prime then $p^m -1$ is an even number. So unless the trivial 
case $p=3$ and $m=1$, 
$U({\bbbz}_{p^m -1})$ does not contain any such group $G$.
\begin{lemma}\label{mult_trans_inv}
Let $\gamma^i \in U(({\mathsf{GF}}(p^m)^*,.,1,\times,\gamma))$. Then the map 
\begin{equation}
\begin{array}{llll}
\lambda^{\times}_{\gamma^i} : & {\mathsf{GF}}(p^m)^* & \rightarrow & 
{\mathsf{GF}}(p^m)^*\\
& \gamma^j &\mapsto & \gamma^i \times \gamma^j\ .
\end{array}
\end{equation} 
is a group automorphism of $({\mathsf{GF}}(p^m)^*,.,1)$.
\end{lemma}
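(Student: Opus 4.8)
The plan is to recognize $\lambda^{\times}_{\gamma^i}$ as the left-multiplication map by a unit in the field-ring $({\mathsf{GF}}(p^m)^*,.,1,\times,\gamma)$ and to invoke the general principle that, in any unitary ring, multiplication by a unit is an automorphism of the underlying additive group. The subtlety to keep in mind throughout is the double role played by ${\mathsf{GF}}(p^m)^*$: here the field multiplication ``$.$'' is the \emph{additive} operation of the new ring (its neutral element is $1=\gamma^0$), whereas ``$\times$'' is the ring's \emph{multiplication} (its neutral element is $\gamma=\gamma^1$). Thus proving that $\lambda^{\times}_{\gamma^i}$ is an automorphism of the group $({\mathsf{GF}}(p^m)^*,.,1)$ amounts precisely to proving that $\times$-multiplication by $\gamma^i$ is additive and bijective.

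First I would check the homomorphism property, which is nothing but the distributivity already recorded in the preceding discussion. For any $\gamma^j,\gamma^k$,
\begin{equation}
\lambda^{\times}_{\gamma^i}(\gamma^j\cdot\gamma^k)=\gamma^i\times(\gamma^j\gamma^k)=(\gamma^i\times\gamma^j)(\gamma^i\times\gamma^k)=\lambda^{\times}_{\gamma^i}(\gamma^j)\cdot\lambda^{\times}_{\gamma^i}(\gamma^k),
\end{equation}
so $\lambda^{\times}_{\gamma^i}$ is an endomorphism of $({\mathsf{GF}}(p^m)^*,.)$.

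Next I would establish bijectivity using the hypothesis that $\gamma^i$ is a unit. By definition of $U(({\mathsf{GF}}(p^m)^*,.,1,\times,\gamma))$ there is a $\times$-inverse $(\gamma^i)^{-1}$ with $(\gamma^i)^{-1}\times\gamma^i=\gamma$. Then associativity of $\times$ together with the fact that $\gamma$ is the $\times$-identity give, for every $\gamma^j$,
\begin{equation}
\lambda^{\times}_{(\gamma^i)^{-1}}\bigl(\lambda^{\times}_{\gamma^i}(\gamma^j)\bigr)=(\gamma^i)^{-1}\times(\gamma^i\times\gamma^j)=\bigl((\gamma^i)^{-1}\times\gamma^i\bigr)\times\gamma^j=\gamma\times\gamma^j=\gamma^j,
\end{equation}
and symmetrically in the other order, so $\lambda^{\times}_{(\gamma^i)^{-1}}$ is a two-sided inverse of $\lambda^{\times}_{\gamma^i}$. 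Hence $\lambda^{\times}_{\gamma^i}$ is a bijective endomorphism, that is, an automorphism of $({\mathsf{GF}}(p^m)^*,.,1)$.

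I do not expect a genuine obstacle here: every ingredient — distributivity, associativity of $\times$, and the existence of the $\times$-inverse — is either stated in the discussion just above or built into the very definition of the group of units. The only point demanding care is purely notational, namely not conflating the two multiplications and remembering that it is the \emph{additive} group of the field-ring, $({\mathsf{GF}}(p^m)^*,.)$, whose automorphism we are exhibiting.
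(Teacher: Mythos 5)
Your proof is correct. The homomorphism step is identical to the paper's: both of you invoke the distributivity $\gamma^i\times(\gamma^j\gamma^k)=(\gamma^i\times\gamma^j)(\gamma^i\times\gamma^k)$ recorded in the discussion of the field-ring structure. Where you diverge is in the bijectivity step. The paper computes the kernel of $\lambda^{\times}_{\gamma^i}$ as an endomorphism of $({\mathsf{GF}}(p^m)^*,.,1)$: if $\gamma^{ij}=1$ then $ij=0$ in ${\bbbz}_{p^m-1}$, and since $i$ is a unit there this forces $j=0$; injectivity plus finiteness of ${\mathsf{GF}}(p^m)^*$ then yields surjectivity. You instead exhibit $\lambda^{\times}_{(\gamma^i)^{-1}}$ as an explicit two-sided inverse, using associativity of $\times$ and the fact that $\gamma$ is the $\times$-identity. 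Both arguments hinge on the same hypothesis that $\gamma^i$ is a unit, but yours is marginally more general (it does not use finiteness, and it hands you the inverse automorphism $\lambda^{\times}_{\gamma^{1/i}}$ explicitly, which is in fact the map the paper later uses in the proof of Theorem~\ref{theorem_avec_ssgroupes} when it writes $(\lambda^{\times}_{\gamma^{i-1}})^{-1}=\lambda^{\times}_{\gamma^{\frac{1}{i-1}}}$), while the paper's kernel computation is the more routine reflex for finite groups. Your cautionary remark about not conflating the two multiplications is well placed and matches the care the paper takes with the same notational hazard.
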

\begin{proof}
Since $\times$ is distributive on $.$, $\lambda^{\times}_{\gamma^i}$ is a 
group endomorphism of $({\mathsf{GF}}(p^m)^*,.,1))$. Let $\gamma^j$ such that 
$\gamma^{ij}=1$. This is equivalent to $ij=0$. But $\gamma^i \in 
U({\mathsf{GF}}(p^m)^*)$ so 
$i \in U({\bbbz}_{p^m -1})$ and then $ij=0$ if and 
only if $j=0$. So $\gamma^j = \gamma^0=1$ and $\lambda^{\times}_{\gamma^i}$ is one-to-one also is onto. It is thus an element of 
$\mathit{Aut}(({\mathsf{GF}}(p^m)^*,.,1))$.\qed
\end{proof}

\begin{lemma}\label{lemma_sur_action_sous_grp_units}
Let $G$ be a subgroup of $(U({\mathsf{GF}}(p^m)^*),\times,\gamma)$. Then $G$ 
acts faithfully (by group automorphism) on $({\mathsf{GF}}(p^m)^*,.,1)$ by 
$\rho(\gamma^i) : \gamma^j \mapsto \gamma^i \times \gamma^j$.
\end{lemma}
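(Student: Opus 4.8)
The plan is to verify the three defining ingredients of a faithful action by group automorphisms: that each $\rho(\gamma^i)$ is an automorphism of $({\mathsf{GF}}(p^m)^*,.,1)$, that $\rho$ is a group homomorphism from $(G,\times,\gamma)$ into the permutation group $S(({\mathsf{GF}}(p^m)^*,.,1))$, and that $\rho$ has trivial kernel.

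First I would note that since $G$ is a subgroup of $U({\mathsf{GF}}(p^m)^*)$, every $\gamma^i \in G$ is a $\times$-unit, so Lemma~\ref{mult_trans_inv} applies verbatim and tells us that $\rho(\gamma^i) = \lambda^{\times}_{\gamma^i}$ is a group automorphism of $({\mathsf{GF}}(p^m)^*,.,1)$. In particular each $\rho(\gamma^i)$ is a permutation, so $\rho$ genuinely maps $G$ into $S(({\mathsf{GF}}(p^m)^*,.,1))$ and the action is by automorphisms. This is the only step carrying real content, and it is already settled by the earlier lemma.

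Next I would check the homomorphism property, which is merely associativity of $\times$. For $\gamma^i, \gamma^j \in G$ and any $\gamma^k \in {\mathsf{GF}}(p^m)^*$,
\begin{equation}
\rho(\gamma^i \times \gamma^j)(\gamma^k) = (\gamma^i \times \gamma^j)\times \gamma^k = \gamma^i \times (\gamma^j \times \gamma^k) = \rho(\gamma^i)\big(\rho(\gamma^j)(\gamma^k)\big),
\end{equation}
so $\rho(\gamma^i \times \gamma^j) = \rho(\gamma^i)\circ\rho(\gamma^j)$; since $\times$ is the group law of $G$, $\rho$ is a homomorphism into the permutation group.

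Finally, for faithfulness I would compute the kernel. If $\rho(\gamma^i) = \mathit{Id}$ then $\gamma^i \times \gamma^k = \gamma^k$ for every $k$; taking $\gamma^k = \gamma = \gamma^1$, the $\times$-identity, gives $\gamma^i = \gamma^i \times \gamma = \gamma$, which is precisely the identity of $(G,\times,\gamma)$. Hence $\ker \rho$ is trivial and $\rho$ is injective, i.e. the action is faithful. I do not expect any genuine obstacle here: the entire statement reduces to Lemma~\ref{mult_trans_inv} together with the associativity and unit axioms of the ring structure $({\mathsf{GF}}(p^m)^*,.,1,\times,\gamma)$.
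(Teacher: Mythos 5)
Your proof is correct and follows essentially the same route as the paper's: invoke Lemma~\ref{mult_trans_inv} to see that each $\rho(\gamma^i)$ is an automorphism of $({\mathsf{GF}}(p^m)^*,.,1)$, check the homomorphism property via associativity of $\times$, and establish faithfulness by evaluating a kernel element at the $\times$-identity $\gamma$ (the paper phrases this in exponents as $i1=1$ hence $i=1$, which is the same computation).
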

\begin{proof}
We define    
\begin{equation}
\begin{array}{llll}
\rho : & G & \rightarrow & \mathit{Aut}(({\mathsf{GF}}(p^m)^*,.,1))\\
& \gamma^i & \mapsto & \lambda^{\times}_{\gamma^i}: 
(\gamma^j \mapsto \gamma^i \times \gamma^j)\ .
\end{array}
\end{equation}
(By lemma~\ref{mult_trans_inv} we already know that for each $\gamma^i \in G$, we have 
$\rho(\gamma^i)=\lambda^{\times}_{\gamma^i} \in \mathit{Aut}(({\mathsf{GF}}(p^m)^*,.,1))$.) 
Let's prove that is a group action on ${\mathsf{GF}}(p^m)^*$. 
Let $\gamma^i$ and $\gamma^j$ be elements of 
$G$. Let $\gamma^k \in {\mathsf{GF}}(p^m)^*$. $\rho(\gamma^i \times \gamma^j)
(\gamma^k)=\rho(\gamma^{ij})(\gamma^k)=\gamma^{ij}\times\gamma^k = 
\gamma^{ijk}=\gamma^{i}\times (\gamma^{j} \times \gamma^k)=
(\rho(\gamma^i) \circ \rho(\gamma^j))(\gamma^k)$. Then $\rho$ is a 
group homomorphism from $G$ to $\mathit{Aut}({\mathsf{GF}}(p^m)^*,.,1))$. 
Finally let $\gamma^i \in G$ such that $\rho(\gamma^i)=
\mathit{Id}_{{\mathsf{GF}}(p^m)^*}$. For any $k \in {\bbbz}_{p^m -1}$, 
$\gamma^{ik} = \gamma^k$. So $ik = k$ and in particular $i1 = 1$, therefore 
$i=1$ and $\gamma^i = \gamma^1 = \gamma$. We deduce that $\rho$ is 
one-to-one
and the action is thus faithful. \qed
\end{proof}
\begin{definition}
\textup{
Let $G$ be a group and $X$ be any (nonempty) set. The restriction to $G^*$ 
of a map $f : G \rightarrow X$ is denoted $f^*$.} 
\end{definition}
\begin{theorem}\label{theorem_avec_ssgroupes}
Let $m \in {\bbbn}^*$ such that $m > 1$. Let 
$G$ be a subgroup of $U({\bbbz}_{2^{m} -1})$ such that for each $i \in G^*$, $i-1 \in U({\bbbz}_{2^{m}-1})$ (such a group exists according to 
lemma~\ref{lemma_existence_gr_unite_un_peu_special} since $2^m -1 > 1$ by 
assumption and is an odd number). 
Let $\lambda$ be a field automorphism 
from ${\mathsf{GF}}(2^{m})$ to itself. Then we have
\begin{enumerate}
\item $\lambda$ is $({\mathsf{GF}}(2^{m})^*,.,1)$-perfect nonlinear from 
${\mathsf{GF}}(2^{m})$ to ${\mathsf{GF}}(2^m)$;
\item $\lambda^*$ is $(\gamma^{G},\times,\gamma)$-perfect nonlinear from 
${\mathsf{GF}}(2^{m})^*$ to ${\mathsf{GF}}(2^m)^*$ where $\gamma^G=e_{\gamma}(G)$.
\end{enumerate}
\end{theorem}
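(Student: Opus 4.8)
The plan is to handle the two items separately, reducing each to the assertion that a single derivative map is a \emph{bijection}: in both items source and target have equal cardinality, so ``balanced'' means exactly that every fibre is a singleton.

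For item~1, I would use that a field automorphism $\lambda$ of ${\mathsf{GF}}(2^m)$ is at once additive and multiplicative and fixes $0$ and $1$, and that $({\mathsf{GF}}(2^m)^*,.,1)$ acts faithfully on ${\mathsf{GF}}(2^m)$ by multiplication (third item of the Example), so the definition applies. For $\alpha \in {\mathsf{GF}}(2^m)^{**}$ one computes
\[
d_{\alpha}\lambda(x)=\lambda(\alpha x)-\lambda(x)=\lambda(\alpha)\lambda(x)-\lambda(x)=(\lambda(\alpha)-1)\lambda(x).
\]
Since $\alpha\neq 1$ and $\lambda$ is a bijection fixing $1$, the scalar $\lambda(\alpha)-1$ is nonzero, hence invertible in the field; thus $d_{\alpha}\lambda$ is the composite of the bijection $\lambda$ with multiplication by a nonzero constant, so it is a permutation of ${\mathsf{GF}}(2^m)$ and every fibre has size $1=|{\mathsf{GF}}(2^m)|/|{\mathsf{GF}}(2^m)|$, which is the required balancedness.

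For item~2 I would first verify the hypotheses of the definition and then transport the computation into the exponent. As $G\subseteq U({\bbbz}_{2^m-1})$ and $e_{\gamma}$ maps $({\bbbz}_{2^m-1},+,0,.,1)$ isomorphically onto $({\mathsf{GF}}(2^m)^*,.,1,\times,\gamma)$, the set $\gamma^{G}=e_{\gamma}(G)$ is a subgroup of $(U({\mathsf{GF}}(2^m)^*),\times,\gamma)$, so Lemma~\ref{lemma_sur_action_sous_grp_units} guarantees it acts faithfully on $({\mathsf{GF}}(2^m)^*,.,1)$ through $\rho$. The key reduction is that, $\gamma$ being a primitive root and $\lambda$ a field automorphism, $\lambda(\gamma)$ is again a primitive root, hence $\lambda(\gamma)=\gamma^{t}$ for some $t\in U({\bbbz}_{2^m-1})$ and $\lambda(\gamma^{k})=\gamma^{tk}$ for all $k$. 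Writing a direction as $\alpha=\gamma^{i}$ with $i\in G^*$ (the identity of $(\gamma^{G},\times,\gamma)$ being $\gamma=\gamma^{1}$) and a point as $x=\gamma^{j}$, and recalling that the target $({\mathsf{GF}}(2^m)^*,.,1)$ is multiplicative so the derivative uses division, the relation $\alpha.x=\gamma^{i}\times\gamma^{j}=\gamma^{ij}$ gives
\[
d_{\alpha}\lambda^{*}(\gamma^{j})=\lambda(\gamma^{ij})\,\lambda(\gamma^{j})^{-1}=\gamma^{tij}\gamma^{-tj}=\gamma^{t(i-1)j},
\]
all exponents read modulo $2^m-1$. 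As $j$ ranges over ${\bbbz}_{2^m-1}$ this is a bijection of ${\mathsf{GF}}(2^m)^*$ exactly when $k\mapsto t(i-1)k$ is a bijection of ${\bbbz}_{2^m-1}$, i.e.\ when $t(i-1)\in U({\bbbz}_{2^m-1})$; here $t$ is a unit because $\lambda$ is an automorphism and $i-1$ is a unit by the defining property of $G$, so their product is a unit and $d_{\alpha}\lambda^{*}$ is a permutation with singleton fibres.

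The genuinely delicate point is item~2: one must notice that, after passing to the exponent, the exotic $\times$-derivative of $\lambda^{*}$ collapses to multiplication by the single scalar $t(i-1)$, and that the hypothesis imposed on $G$ ($i-1$ invertible for every $i\in G^*$) is precisely---and only---what makes that multiplication invertible, the unit $t$ (reflecting that automorphisms permute primitive roots) being the complementary ingredient. Everything else is the routine ``balanced $\Leftrightarrow$ bijective'' bookkeeping forced by the equality of cardinalities.
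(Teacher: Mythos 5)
Your proof is correct, but it takes a more computational route than the paper's at both steps. For item~1 the paper simply invokes Theorem~\ref{theorem_general_mult_PN} (via Lemma~\ref{lemme_sur_les_cardinaux}, whose underlying identity is $d_{\alpha}\lambda(v)=\lambda((\alpha-1).v)$, using only \emph{additivity} of $\lambda$), whereas you exploit multiplicativity to get $d_{\alpha}\lambda(x)=(\lambda(\alpha)-1)\lambda(x)$; your version is self-contained and elementary, the paper's is more general in that it covers arbitrary additive epimorphisms, not just field automorphisms. For item~2 the paper stays inside the field-ring: it uses the distributivity law to rewrite $d_{\gamma^i}\lambda^*(\gamma^j)=\beta$ as $\gamma^{i-1}\times\gamma^j\in\lambda^{-1}(\{\beta\})$, then applies Lemma~\ref{mult_trans_inv} (invertibility of the $\times$-translation by $\gamma^{i-1}$) and the kernel-counting Lemma~\ref{lemme_pour_egalite_entre_ImInv_et_Ker} to conclude each fibre has size $|\ker\lambda^*|=1$. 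You instead transport everything to the exponent ring ${\bbbz}_{2^m-1}$ via $e_{\gamma}$, where $\lambda$ becomes multiplication by the unit $t$ (with $\lambda(\gamma)=\gamma^{t}$) and the derivative collapses to $j\mapsto t(i-1)j$. The crux is identical in both treatments --- invertibility of $i-1$ modulo $2^m-1$ is exactly what is needed --- but your route requires the extra (true, and correctly justified) observation that automorphisms send primitive roots to primitive roots so that $t$ is a unit, a fact the paper never needs because it keeps $\lambda$ abstract and uses only its injectivity. Both arguments are sound; yours makes the mechanism more transparent at the cost of being tied to the specific form of $\lambda$.
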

\begin{proof}
\begin{enumerate}
\item This result is clear by applying theorem~\ref{theorem_general_mult_PN} 
with ${\mathsf{GF}}(2^m)$ considered as a one-dimensional vectors space over 
itself;
\item Since $\gamma^G = e_{\gamma}(G)$, $\gamma^G$ is a subgroup of 
the group of units of ${\mathsf{GF}}(2^{m})^*$. 
By lemma~\ref{lemma_sur_action_sous_grp_units}, 
$\gamma^G$ acts faithfully on ${\mathsf{GF}}(2^{m})^*$ 
by group automorphism. Because $\lambda$ is a field homomorphism, $\lambda({\mathsf{GF}}(2^{m})^*) \subseteq 
{\mathsf{GF}}(2^m)^*$ and therefore $\lambda^* : 
{\mathsf{GF}}(2^{m})^* \rightarrow {\mathsf{GF}}(2^m)^*$ is a group homomorphism. 
Moreover $\lambda^*$ is onto. Indeed for $y \in {\mathsf{GF}}(2^m)^*$ there is 
$x \in {\mathsf{GF}}(2^{m})$ such that $\lambda (x)=y$. Since $y \not = 0$, 
$x \not = 0$ and therefore $\lambda^{*}(x)=y$. So $\lambda^*$ is a group 
epimorphism (and then a group automorphism). 
Let $\beta \in {\mathsf{GF}}(2^m)^* = \lambda({\mathsf{GF}}(2^{m})^*)$. 
 Let $\gamma^i \in (\gamma^{G})^*$ (so $i \not = 1$). Let's prove that 
$\{\gamma^j \in {\mathsf{GF}}(2^{m})^*| d_{\gamma^i}\lambda^* (\gamma^j)=\beta\} = \gamma^{\frac{1}{j}}\times \lambda^{-1}(\{\beta\})$. We have
\begin{equation}
\begin{array}{llll}
& d_{\gamma^i}\lambda^* (\gamma^j)&=&\beta\\
\Leftrightarrow &\displaystyle \frac{\lambda^{*}(\gamma^i \times \gamma^j)}
{\lambda^* (\gamma^j)}&=&\beta\\
\Leftrightarrow& \displaystyle \lambda (\frac{(\gamma^i \times \gamma^j)}
{\gamma^{j}}) &=& \beta\ \mbox{(because $\lambda$ is a field homomorphism)}\\
\Leftrightarrow & \lambda((\gamma^i \times \gamma^j)(\gamma^{-j}))&=&\beta\\
\Leftrightarrow & \lambda((\gamma^i \times \gamma^j)(\gamma^{-1}\times \gamma^j)) &=&\beta\\
\Leftrightarrow & \lambda((\gamma^i \gamma^{-1})\times \gamma^j)&=&\beta\ 
\mbox{(by distributivity)}\\
\Leftrightarrow & \lambda(\gamma^{i-1}\times \gamma^j)&=&\beta\\
\Leftrightarrow & \gamma^{i-1}\times \gamma^j &\in &\lambda^{-1}(\{\beta\})\ .
\end{array}
\end{equation}
Since $\gamma^i \in (\gamma^G)^*\ \Leftrightarrow\ i \in G^*$ and 
by assumption 
on $G$, $i-1$ is invertible modulo $2^{m}-1$. Then $\gamma^{i-1} \in 
U({\mathsf{GF}}(2^{m})^*)$. According to lemma~\ref{mult_trans_inv}, 
$\lambda_{\gamma^{i-1}}^{\times} \in \mathit{Aut}(({\mathsf{GF}}(2^m)^*,.,1))$. 
Therefore $\gamma^{i-1}\times \gamma^j \in \lambda^{-1}(\{\beta\})$ 
$\Leftrightarrow$ $\gamma^j \in (\lambda^{\times}_{\gamma^{i-1}})^{-1}
\left (\lambda^{-1}(\{\beta\})\right )=\gamma^{\frac{1}{i-1}}\times
\lambda^{-1}(\{\beta\})$. Since $\lambda^{\times}_{\gamma^{\frac{1}{i-1}}}$ is 
a permutation we have $|\lambda^{-1}(\{\beta\})| = 
|\gamma^{\frac{1}{i-1}}\times\lambda^{-1}(\{\beta\})|$. Because 
$\beta \in {\mathsf{GF}}(p^m)^*$, we have $\lambda^{-1}(\{\beta\}) = 
(\lambda^*)^{-1}(\{\beta\})$ and by 
lemma~\ref{lemme_pour_egalite_entre_ImInv_et_Ker}, we deduce that 
$|\gamma^{\frac{1}{i-1}}\times\lambda^{-1}(\{\beta\})|=
|(\lambda^{*})^{-1}(\{\beta\})|=|\ker \lambda^*|$ with 
$\ker \lambda^* = \{x \in {\mathsf{GF}}(2^{m})^*| \lambda^*(x)=1\}=\{x \in {\mathsf{GF}}(2^m)^*|\lambda(x)
=1\}$. In addition 
$\{\lambda^{-1}(\{\beta\})\}_{\beta \in {\mathsf{GF}}(p^{m})^*}$ is a partition 
of ${\mathsf{GF}}(2^{m})^*$. Therefore we have 
\begin{equation}|{\mathsf{GF}}(2^{m})^*|=
\displaystyle \sum_{\beta \in {\mathsf{GF}}(2^m)^*}|\lambda^{-1}(\beta)|=
|\ker \lambda^*||{\mathsf{GF}}(2^m)^*|\ .
\end{equation}
Then for each 
$\gamma^i \in (\gamma^{G})^*$ (or equivalently for each $i \in G^*$) 
and for each $\beta \in {\mathsf{GF}}(2^m)^*$, 
$|\{\gamma^j \in {\mathsf{GF}}(2^{m})^*| 
d_{\gamma^i}\lambda^* (\gamma^j)=\beta\}|=|\ker \lambda^*|=1$.
\end{enumerate}
\qed
\end{proof}
\begin{definition}
\textup{
Let $p=2^q -1$ be a Mersenne prime number. A function $f : {\mathsf{GF}}(2^q)
\rightarrow {\mathsf{GF}}(2^q)$ such that $f(\alpha)\not=0$ for all 
invertible $\alpha \in {\mathsf{GF}}(2^q)$ 
is called {\it doubly perfect nonlinear} if 
\begin{enumerate}
\item $f$ is $({\mathsf{GF}}(2^q)^*,.,1)$-perfect nonlinear from 
${\mathsf{GF}}(2^q)$ to itself;
\item $f^*$ is $({\mathsf{GF}}(2^q)^{**},\times,\gamma)$-perfect nonlinear 
from ${\mathsf{GF}}(2^q)^{*}$ to itself.
\end{enumerate}
}
\end{definition}
Since the group of field automorphisms of a finite field ${\mathsf{GF}}(p^m)$ 
is identical to the Galois group of the degree $m$ extension ${\mathsf{GF}}(p^m)$ over its prime field which is a cyclic group generated by the 
Frobenius automorphism 
\begin{equation}
\begin{array}{llll}
{\mathcal{F}}_p:&{\mathsf{GF}}(p^m)&\rightarrow&{\mathsf{GF}}(p^m)\\
& x & \mapsto & x^p
\end{array}
\end{equation} 
every field automorphism $\lambda$ can be written as 
${\mathcal{F}}_p^r$ for one $r$ such that $0\leq r \leq m-1$.
We now give a nice result that asserts the existence 
of a Boolean permutation over ${\mathsf{GF}}(2^q)$, where $p=2^q -1$ is a Mersenne prime, which is merely both 
$({\mathsf{GF}}(2^q)^*,.,1)$ and $({\mathsf{GF}}(2^q)^{**},\times,\gamma)$-perfect 
nonlinear {\it i.e.} doubly perfect nonlinear.
\begin{theorem}\label{thm_frobenius}
Let $p=2^q -1$ be a Mersenne prime number. Let $\lambda = 
{\mathcal{F}}_2^r$ (for any $0 \leq r \leq q-1$) 
be a field automorphism of ${\mathsf{GF}}(2^q)$. Then $\lambda$ is a doubly perfect nonlinear permutation.
\end{theorem}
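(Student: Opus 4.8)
The plan is to obtain this theorem as a direct specialization of Theorem~\ref{theorem_avec_ssgroupes} to the dimension $m=q$, the whole point being that the Mersenne-prime hypothesis forces the technical subgroup $G$ appearing there to be the full group of units, so that $\gamma^G$ becomes exactly ${\mathsf{GF}}(2^q)^{**}$.

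First I would dispose of the preliminary requirements in the definition of a doubly perfect nonlinear function. Since $\lambda={\mathcal{F}}_2^r$ is a field automorphism it is in particular a bijection of ${\mathsf{GF}}(2^q)$, hence a permutation, and because $\lambda(x)=0$ if and only if $x=0$ we get $\lambda(\alpha)\neq 0$ for every invertible (that is, nonzero) $\alpha$, which is precisely the side condition imposed on $f$ in the definition. I would also record that $q>1$: a Mersenne prime $2^q-1$ forces $q$ itself to be prime (by Lemma~\ref{lemma_Mersenne}), so $q\geq 2$, and the hypotheses of Theorem~\ref{theorem_avec_ssgroupes} are indeed available.

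Next I would choose the subgroup $G$. Because $p=2^q-1$ is prime, ${\bbbz}_{2^q-1}={\bbbz}_p$ is a field and $U({\bbbz}_p)={\bbbz}_p^*$. Taking $G={\bbbz}_p^*$, the condition required in Theorem~\ref{theorem_avec_ssgroupes} is immediate: for $i\in G^*$ one has $i\neq 1$, hence $i-1\neq 0$, hence $i-1\in{\bbbz}_p^*=U({\bbbz}_p)$ (this is exactly the remark made after Lemma~\ref{lemma_existence_gr_unite_un_peu_special}). Part~1 of Theorem~\ref{theorem_avec_ssgroupes} then yields property~1 of the definition, namely that $\lambda$ is $({\mathsf{GF}}(2^q)^*,.,1)$-perfect nonlinear from ${\mathsf{GF}}(2^q)$ to itself.

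Finally, for property~2 I would identify $\gamma^G$. Since $e_{\gamma}$ is the group isomorphism $({\bbbz}_{p},+)\to{\mathsf{GF}}(2^q)^*$ with $e_{\gamma}(0)=\gamma^0=1$, the image $\gamma^G=e_{\gamma}({\bbbz}_p^*)$ equals ${\mathsf{GF}}(2^q)^*\setminus\{1\}={\mathsf{GF}}(2^q)^{**}$. Thus part~2 of Theorem~\ref{theorem_avec_ssgroupes} states exactly that $\lambda^*$ is $({\mathsf{GF}}(2^q)^{**},\times,\gamma)$-perfect nonlinear from ${\mathsf{GF}}(2^q)^*$ to itself, which is property~2. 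Both clauses of the definition then hold, so $\lambda$ is doubly perfect nonlinear. There is essentially no hard analytic step here; the only thing to watch is the bookkeeping identifying $\gamma^G$ with ${\mathsf{GF}}(2^q)^{**}$, and observing that it is precisely the primality of $2^q-1$ that makes ${\bbbz}_p^*$ simultaneously the entire group of units and a group satisfying the $i-1$ condition, so that $\gamma^G$ fills out the whole punctured group ${\mathsf{GF}}(2^q)^{**}$ demanded by the definition --- which is exactly why the construction succeeds for Mersenne primes.
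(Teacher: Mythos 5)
Your proposal is correct and follows essentially the same route as the paper: both specialize Theorem~\ref{theorem_avec_ssgroupes} by taking $G={\bbbz}_p^*$ (using the primality of $p=2^q-1$ to get the $i-1$ invertibility condition for free) and observing that $\gamma^G={\mathsf{GF}}(2^q)^{**}$. The extra bookkeeping you include (that $\lambda$ is a permutation vanishing only at $0$, and that $q>1$) is a welcome addition the paper leaves implicit.
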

\begin{proof}
Because $p=2^q -1$ is a prime number, ${\mathsf{GF}}(2^q)^*$ is isomorphic 
to the 
field ${\mathsf{GF}}(p)={\bbbz}_p$. Therefore we can choose 
$G={\bbbz}_p^*$ as a group such that for each $i \in G^*$, $i-1$ is invertible modulo $p$. 
Then $\gamma^{G}=U({\mathsf{GF}}(2^q)^*)={\mathsf{GF}}(2^q)^{**}=
{\mathsf{GF}}(2^q)\setminus\{0,1\}$. According to 
theorem~\ref{theorem_avec_ssgroupes}, $\lambda$ is 
$({\mathsf{GF}}(2^q)^*,.,1)$-perfect nonlinear and 
$\lambda^*$ is $({\mathsf{GF}}(2^q)^{**},\times,\gamma)$-perfect nonlinear.\qed
\end{proof}


\end{document}